\RequirePackage{fix-cm}
\documentclass[10pt,letterpaper]{article}
\usepackage{setspace} 
\doublespacing
\usepackage{mathptmx}
\usepackage{amssymb}
\usepackage{amsthm}
\usepackage{amsmath}
\usepackage{graphicx}
\usepackage[table]{xcolor}
\usepackage{multirow}
\tolerance = 10000
\setlength{\oddsidemargin}{-5mm}
\setlength{\evensidemargin}{-5mm}
\setlength{\topmargin}{-5mm}
\setlength{\headheight}{0mm}
\setlength{\headsep}{0mm}
\setlength{\textwidth}{175mm}
\setlength{\textheight}{232mm}
\theoremstyle{definition}
\newtheorem{definition}{Definition}[section]

\newtheorem{lemma}{Lemma}[section]

\newtheorem{example}{Example}[section]

\newtheorem{remark}{Remark}[section]
\begin{document}
\pagenumbering{arabic}
%
%\hspace{1cm}
%
\begin{center}
{\Large\bf{Consciousness and the structuring property of typical data}}\\
Jonathan W. D. Mason, jwmason@jwmason.net, See Published OnlineOpen\\
\begin{spacing}{1}
Complexity, doi: 10.1002/cplx.21431, Copyright \copyright 2012 Wiley Periodicals, Inc.
\end{spacing}
\end{center}
\vspace{-1cm}
\begin{abstract}
The theoretical base for consciousness, in particular an explanation of how consciousness is defined by the brain, has long been sought by science. We propose a partial theory of consciousness as relations defined by typical data. The theory is based on the idea that a brain state on its own is almost meaningless but in the context of the typical brain states, defined by the brain's structure, a particular brain state is highly structured by relations. The proposed theory can be applied and tested both theoretically and experimentally. Precisely how typical data determines relations is fully established using discrete mathematics.
\end{abstract}
%
%\hspace{1cm}
%
\section{Introduction}
\label{intro}
In neuroscience the neural correlates of consciousness provide an important empirical base for consciousness but not a theoretical one. To clarify, a theoretical base is a predictive theory that is free from empirical methodology whilst usually appealing to, and revealing aspects of, the innate mathematical properties of what is being studied. In contrast the neural correlates of consciousness at some stage rely on obtaining information about a person's experience by asking them or by considering their sensory input. Subsequently a given experience can be associated with the aspects of a person's neurological state that are always observed for that experience. To exemplify the difference, compare Newtonian mechanics with astronomical predictions based on astronomical tables. Importantly it is expected that the neural correlates of consciousness alone cannot provide a satisfactory explanation of consciousness since this would invoke some unknown agency that can discover the external cause of a particular neurological state within the brain so as to associate that state with an appropriate experience. Hence an important requirement of a theoretical base for consciousness is that it should avoid the use of any prior knowledge of what stimulates the senses. We should expect the brain itself to fully define conscious experience all be it having been stimulated by the senses. To assess whether a particular theory meets this requirement we also need a clear notion of what consciousness is. Whilst consensus in this regard will be hard to come by, it can be argued that one fundamental aspect of consciousness is the role played by relations such as those that define geometric content or the individuality of objects, their relationships and type such as visual or auditory. We therefore postulate that our conscious experience could largely be a mathematical structure defined by relations. In this case the principle underlying how the brain simultaneously defines all the required relations is needed. For example, since the part of conscious experience that correlates with the state of the primary visual cortex is of a metric space viewed from a particular position, we expect that the primary visual cortex ought to define relations between neurons, or other identifiable nodes, that result in a metric space. This paper proposes a theory that may satisfy these requirements whilst being theoretically and experimentally amenable to the scientific method. Of course the scientific literature does already include important contributions towards establishing a theoretical base for consciousness. Perhaps the most prominent of these is the theory of consciousness as integrated information proposed by Giulio Tononi in 2008, see \cite{Tononi}. Tononi had previously worked with Gerald Edelman the Nobel Prize-winning immunologist and subsequent neuroscientist. Together they wrote a book entitled A Universe of Consciousness, see \cite{Edelman}, which provides significant scientific insight towards an account of consciousness. However, whilst the importance of relations is evident in their work, their emphasis does not suggest how the content of consciousness might be defined by the brain. A review of the book by Giorgio Ascoli, see \cite{Ascoli}, points out that the authors focus on the properties of the neural process such as integrated activity in the highly reentrant dynamic core, where the dynamic core is a large part of the thalamocortical system, and also on the properties of consciousness such as unity, privateness, coherence and informativeness. In Ascoli's view the book does not address the question of why a sensation corresponds to a specific state of the dynamic core as opposed to another one. In this respect I support the view that the relations defined by the brain are important. It can be seen from \cite{Shinkareva} that the brain defines relationships between certain patterns of activity occurring in various sensory regions of the brain. For example, for a given pattern of activity in the visual cortex we can ask whether it is typical for another particular pattern of activity to be present at the same time in the auditory cortex. If so then the given pattern is related to the latter pattern. Consider how such a relationship might be contributing to the experience of seeing a picture of Albert Einstein whilst hearing the name Albert as opposed to hearing the noun apple. For now the experience associated with a particular pattern of activity may be known from the neural correlates of consciousness. However the relationships that the brain defines between patterns allows more to be derived about a person's experience than that associated to the patterns in the sensory regions of the brain alone. Hence we should try to move down from this higher semantic level replacing neural correlates of consciousness with derivations involving relations as we go if possible. I do not however doubt the enduring relevance and importance of Edelman and Tononi's work such is the knowledge and insight it provides.\\
The mathematics in this paper is straightforward involving binary relations, matrix tables and a small amount of graph theory. The relevance of such mathematics for the brain has been noticed before particularly in the study of anatomical and functional connectivity, see \cite{Sporns}, which is a different, and yet associated, purpose to that of this paper concerning consciousness.\\
We will start by considering the following properties of the brain that are available for consciousness, noting that the list is not intended to be exhaustive,:
\begin{enumerate}
\item[(i)]
the brain has a large number of identifiable nodes by which we mean neurons in this paper, but more generally possibly cortical columns;
\item[(ii)]
the brain is capable of a large number of states where a brain state is a possible and probable aggregate state of all the brain's nodes;
\item[(iii)]
to some extent there is some type of ordering on the collection of brain states since the brain has some of the properties of an endofunction, all be it under perturbation by the senses.
\end{enumerate}
In this paper we will mainly be considering (i) and (ii) of the above. In this respect Definition \ref{def:CSTD} will be useful where, when applied to the brain, the elements of $S$ are the neurons. Merely to keep things simple we will mainly restrict our selves to nodes that have a two state repertoire.
\begin{definition}
\label{def:CSTD}
Let $S$ be a nonempty finite set, $n:=\# S$. Then a set, for an arbitrary index label $i$,
\begin{equation}
S_{i}:=\{(a,f_{i}(a)):a\in S,\quad f_{i}:S\rightarrow\{0,1\}\},\quad\mbox{where $f_{i}$ is a map},
\end{equation}
will be called {\em{a data element}} for $S$. The set of all data elements for $S$ is denoted $\Omega_{S}$ so that $\#\Omega_{S}=2^{n}$. If a particular subset $T\subseteq\Omega_{S}$ has been associated with $S$ then we will call $T$ the {\em{typical data}} for $S$. Further in such cases we will refer to $S$ as the {\em{carrier set}}. An element $S_{i}\in T$ will be called {\em{a typical data element}}.
\end{definition}
Before we consider the brain the following motivating example will be useful.
\begin{example}
\label{exa:DPP}
We will consider what could appropriately be called: {\em{The definitive player problem}}. The purpose of this simple example is to introduce the idea that typical data can define a structure on a carrier set which in tern gives an interpretation of each typical data element. Consider a library of compact discs and suppose that these discs have all been made to a generic template in the sense that the locations of the bits, either 0 or 1, are the same for all discs. Further suppose that the discs all produce highly structured output on some standard player which always reads off the bits in the same order relative to the generic template. In the language of Definition \ref{def:CSTD} the generic template is the carrier set $S$ and the library is the typical data $T$. Now suppose we have two of these discs $S_{1},S_{2}\in T$ where, on the standard player, $S_{1}$ is Beethoven and $S_{2}$ is Elgar. On some nonstandard player where the order in which the bits are read is different to the standard player it could be that $S_{1}$ is Mozart and $S_{2}$ is something else, possibly white noise, depending on the reading order. Therefore a single disk on its own is almost meaningless. However, by requiring highly structured output, each disc $S_{i}$ in the library defines a subset of the set of all players. By taking the intersection of all these subsets we will be left with relatively few players including the standard player. If the library is large enough and we could measure how structured an output is then the typical data might determine a definitive player and hence, in the context of the library, $S_{1}$ is Beethoven and $S_{2}$ is Elgar.\\
The definitive player in this example is essentially a relation between the bits on the generic disc template, i.e. the carrier set, such that almost every bit is related to two other bits so as to form a sequence up to a choice of direction. When a disc from the library is played on the definitive player the output has relatively few abrupt transitions in output frequency and so there is some similarity between the relation on the carrier set and what is written on the discs.\\
We finish this example by mentioning that there are plenty of different choices of typical data, i.e. libraries, available and in particular many more than there are players. If there are $n$ bit locations on the generic disc template, so that $\# S=n$, then there are $n!$ different players by which we mean $n!$ different sequences of these bit locations. Further the number of different discs that can be written is $2^{n}$, that is $\#\Omega_{S}=2^{n}$. Therefore the number of different subsets of $\Omega_{S}$ is $2^{2^{n}}$ and it is straightforward to show by induction that $2^{2^{n}}>n!$ for all $n\in\mathbb{N}$.
\end{example}
In the next section we will see that the appropriate relation to put on the carrier set, if unique, is explicitly determined by the typical data itself. Suppose in Example 1.1 that instead of the data points on the discs having a two state repertoire, bits, there were as many states as output frequencies or that the nodes on the generic disc template are the bytes instead of the bits. Then the theory in the next section would apply to Example \ref{exa:DPP} and there would not be a problem concerning how to measure the quantity of structure of an output. Moreover towards the end of this paper we will argue that the theory presented solves what is known as the binding problem.
\section{Relations defined by typical data}
\label{sec:1}
We will refer to Table \ref{tab:1} below several times in this section.
\begin{table}
\centering
\caption{One dimensional arrangements of four bit data elements.}
\label{tab:1}
\setlength{\tabcolsep}{4pt}
\resizebox{9.5cm}{!}{
\begin{tabular}{c|@{\hspace{1.7mm}}c@{\hspace{1.7mm}}|@{\hspace{1.7mm}}c@{\hspace{1.7mm}}|@{\hspace{1.7mm}}c@{\hspace{1.7mm}}|@{\hspace{1.7mm}}c@{\hspace{1.7mm}}|@{\hspace{1.7mm}}c@{\hspace{1.7mm}}|@{\hspace{1.7mm}}c@{\hspace{1.7mm}}|@{\hspace{1.7mm}}c@{\hspace{1.7mm}}|@{\hspace{1.7mm}}c@{\hspace{1.7mm}}|@{\hspace{1.7mm}}c@{\hspace{1.7mm}}|@{\hspace{1.7mm}}c@{\hspace{1.7mm}}|@{\hspace{1.7mm}}c@{\hspace{1.7mm}}|@{\hspace{1.7mm}}c@{\hspace{1.7mm}}|@{\hspace{1.7mm}}c@{\hspace{1.7mm}}|@{\hspace{1.7mm}}c@{\hspace{1.7mm}}|@{\hspace{1.7mm}}c@{\hspace{1.7mm}}|@{\hspace{1.7mm}}c@{\hspace{1.7mm}}|@{\hspace{1.7mm}}c@{\hspace{1.7mm}}|@{\hspace{1.7mm}}c@{\hspace{1.7mm}}|@{\hspace{1.7mm}}c@{\hspace{1.7mm}}|@{\hspace{1.7mm}}c@{\hspace{1.7mm}}|@{\hspace{1.7mm}}c@{\hspace{1.7mm}}|@{\hspace{1.7mm}}c@{\hspace{1.7mm}}|@{\hspace{1.7mm}}c@{\hspace{1.7mm}}|@{\hspace{1.7mm}}c@{\hspace{1.7mm}}|}
\multirow{5}{*}{}
&$\mathtt{a}$&$\mathtt{a}$&$\mathtt{a}$&\cellcolor[gray]{0.8}$\mathtt{a}$&$\mathtt{a}$&$\mathtt{a}$&$\mathtt{b}$&$\mathtt{b}$&$\mathtt{b}$&$\mathtt{b}$&$\mathtt{b}$&\cellcolor[gray]{0.8}$\mathtt{b}$&$\mathtt{c}$&$\mathtt{c}$&$\mathtt{c}$&$\mathtt{c}$&$\mathtt{c}$&$\mathtt{c}$&$\mathtt{d}$&$\mathtt{d}$&$\mathtt{d}$&$\mathtt{d}$&$\mathtt{d}$&$\mathtt{d}$\\
&$\mathtt{b}$&$\mathtt{b}$&$\mathtt{c}$&\cellcolor[gray]{0.8}$\mathtt{c}$&$\mathtt{d}$&$\mathtt{d}$&$\mathtt{a}$&$\mathtt{a}$&$\mathtt{c}$&$\mathtt{c}$&$\mathtt{d}$&\cellcolor[gray]{0.8}$\mathtt{d}$&$\mathtt{a}$&$\mathtt{a}$&$\mathtt{b}$&$\mathtt{b}$&$\mathtt{d}$&$\mathtt{d}$&$\mathtt{a}$&$\mathtt{a}$&$\mathtt{b}$&$\mathtt{b}$&$\mathtt{c}$&$\mathtt{c}$\\
&$\mathtt{c}$&$\mathtt{d}$&$\mathtt{b}$&\cellcolor[gray]{0.8}$\mathtt{d}$&$\mathtt{b}$&$\mathtt{c}$&$\mathtt{c}$&$\mathtt{d}$&$\mathtt{a}$&$\mathtt{d}$&$\mathtt{a}$&\cellcolor[gray]{0.8}$\mathtt{c}$&$\mathtt{b}$&$\mathtt{d}$&$\mathtt{a}$&$\mathtt{d}$&$\mathtt{a}$&$\mathtt{b}$&$\mathtt{b}$&$\mathtt{c}$&$\mathtt{a}$&$\mathtt{c}$&$\mathtt{a}$&$\mathtt{b}$\\
&$\mathtt{d}$&$\mathtt{c}$&$\mathtt{d}$&\cellcolor[gray]{0.8}$\mathtt{b}$&$\mathtt{c}$&$\mathtt{b}$&$\mathtt{d}$&$\mathtt{c}$&$\mathtt{d}$&$\mathtt{a}$&$\mathtt{c}$&\cellcolor[gray]{0.8}$\mathtt{a}$&$\mathtt{d}$&$\mathtt{b}$&$\mathtt{d}$&$\mathtt{a}$&$\mathtt{b}$&$\mathtt{a}$&$\mathtt{c}$&$\mathtt{b}$&$\mathtt{c}$&$\mathtt{a}$&$\mathtt{b}$&$\mathtt{a}$\\
&&&&&&&&&&&&&&&&&&&&&&&&\\[-3.7mm]
\hline
\multirow{6}{*}{\large $S_{1}$}
&&&&&&&&&&&&&&&&&&&&&&&&\\[-3.7mm]
&0&0&0&0&0&0&0&0&0&0&0&0&0&0&0&0&0&0&0&0&0&0&0&0\\
&0&0&0&0&0&0&0&0&0&0&0&0&0&0&0&0&0&0&0&0&0&0&0&0\\
&0&0&0&0&0&0&0&0&0&0&0&0&0&0&0&0&0&0&0&0&0&0&0&0\\
&0&0&0&0&0&0&0&0&0&0&0&0&0&0&0&0&0&0&0&0&0&0&0&0\\
&&&&&&&&&&&&&&&&&&&&&&&&\\[-3.7mm]
\hline
\multirow{6}{*}{\large $S_{2}$}
&&&&&&&&&&&&&&&&&&&&&&&&\\[-3.7mm]
&0&0&0&0&0&0&0&0&0&0&0&0&0&0&0&0&0&0&\cellcolor[gray]{0.8}1&\cellcolor[gray]{0.8}1&\cellcolor[gray]{0.8}1&\cellcolor[gray]{0.8}1&\cellcolor[gray]{0.8}1&\cellcolor[gray]{0.8}1\\
&0&0&0&0&\cellcolor[gray]{0.8}1&\cellcolor[gray]{0.8}1&0&0&0&0&\cellcolor[gray]{0.8}1&\cellcolor[gray]{0.8}1&0&0&0&0&\cellcolor[gray]{0.8}1&\cellcolor[gray]{0.8}1&0&0&0&0&0&0\\
&0&\cellcolor[gray]{0.8}1&0&\cellcolor[gray]{0.8}1&0&0&0&\cellcolor[gray]{0.8}1&0&\cellcolor[gray]{0.8}1&0&0&0&\cellcolor[gray]{0.8}1&0&\cellcolor[gray]{0.8}1&0&0&0&0&0&0&0&0\\
&\cellcolor[gray]{0.8}1&0&\cellcolor[gray]{0.8}1&0&0&0&\cellcolor[gray]{0.8}1&0&\cellcolor[gray]{0.8}1&0&0&0&\cellcolor[gray]{0.8}1&0&\cellcolor[gray]{0.8}1&0&0&0&0&0&0&0&0&0\\
&&&&&&&&&&&&&&&&&&&&&&&&\\[-3.7mm]
\hline
\multirow{6}{*}{\large $S_{3}$}
&&&&&&&&&&&&&&&&&&&&&&&&\\[-3.7mm]
&0&0&0&0&0&0&0&0&0&0&0&0&\cellcolor[gray]{0.8}1&\cellcolor[gray]{0.8}1&\cellcolor[gray]{0.8}1&\cellcolor[gray]{0.8}1&\cellcolor[gray]{0.8}1&\cellcolor[gray]{0.8}1&0&0&0&0&0&0\\
&0&0&\cellcolor[gray]{0.8}1&\cellcolor[gray]{0.8}1&0&0&0&0&\cellcolor[gray]{0.8}1&\cellcolor[gray]{0.8}1&0&0&0&0&0&0&0&0&0&0&0&0&\cellcolor[gray]{0.8}1&\cellcolor[gray]{0.8}1\\
&\cellcolor[gray]{0.8}1&0&0&0&0&\cellcolor[gray]{0.8}1&\cellcolor[gray]{0.8}1&0&0&0&0&\cellcolor[gray]{0.8}1&0&0&0&0&0&0&0&\cellcolor[gray]{0.8}1&0&\cellcolor[gray]{0.8}1&0&0\\
&0&\cellcolor[gray]{0.8}1&0&0&\cellcolor[gray]{0.8}1&0&0&\cellcolor[gray]{0.8}1&0&0&\cellcolor[gray]{0.8}1&0&0&0&0&0&0&0&\cellcolor[gray]{0.8}1&0&\cellcolor[gray]{0.8}1&0&0&0\\
&&&&&&&&&&&&&&&&&&&&&&&&\\[-3.7mm]
\hline
\multirow{6}{*}{\large $S_{4}$}
&&&&&&&&&&&&&&&&&&&&&&&&\\[-3.7mm]
&0&0&0&0&0&0&\cellcolor[gray]{0.8}1&\cellcolor[gray]{0.8}1&\cellcolor[gray]{0.8}1&\cellcolor[gray]{0.8}1&\cellcolor[gray]{0.8}1&\cellcolor[gray]{0.8}1&0&0&0&0&0&0&0&0&0&0&0&0\\
&\cellcolor[gray]{0.8}1&\cellcolor[gray]{0.8}1&0&0&0&0&0&0&0&0&0&0&0&0&\cellcolor[gray]{0.8}1&\cellcolor[gray]{0.8}1&0&0&0&0&\cellcolor[gray]{0.8}1&\cellcolor[gray]{0.8}1&0&0\\
&0&0&\cellcolor[gray]{0.8}1&0&\cellcolor[gray]{0.8}1&0&0&0&0&0&0&0&\cellcolor[gray]{0.8}1&0&0&0&0&\cellcolor[gray]{0.8}1&\cellcolor[gray]{0.8}1&0&0&0&0&\cellcolor[gray]{0.8}1\\
&0&0&0&\cellcolor[gray]{0.8}1&0&\cellcolor[gray]{0.8}1&0&0&0&0&0&0&0&\cellcolor[gray]{0.8}1&0&0&\cellcolor[gray]{0.8}1&0&0&\cellcolor[gray]{0.8}1&0&0&\cellcolor[gray]{0.8}1&0\\
&&&&&&&&&&&&&&&&&&&&&&&&\\[-3.7mm]
\hline
\multirow{6}{*}{\large $\underline{S_{5}}$}
&&&&&&&&&&&&&&&&&&&&&&&&\\[-3.7mm]
&\cellcolor[gray]{0.8}1&\cellcolor[gray]{0.8}1&\cellcolor[gray]{0.8}1&\cellcolor[gray]{0.8}1&\cellcolor[gray]{0.8}1&\cellcolor[gray]{0.8}1&0&0&0&0&0&0&0&0&0&0&0&0&0&0&0&0&0&0\\
&0&0&0&0&0&0&\cellcolor[gray]{0.8}1&\cellcolor[gray]{0.8}1&0&0&0&0&\cellcolor[gray]{0.8}1&\cellcolor[gray]{0.8}1&0&0&0&0&\cellcolor[gray]{0.8}1&\cellcolor[gray]{0.8}1&0&0&0&0\\
&0&0&0&0&0&0&0&0&\cellcolor[gray]{0.8}1&0&\cellcolor[gray]{0.8}1&0&0&0&\cellcolor[gray]{0.8}1&0&\cellcolor[gray]{0.8}1&0&0&0&\cellcolor[gray]{0.8}1&0&\cellcolor[gray]{0.8}1&0\\
&0&0&0&0&0&0&0&0&0&\cellcolor[gray]{0.8}1&0&\cellcolor[gray]{0.8}1&0&0&0&\cellcolor[gray]{0.8}1&0&\cellcolor[gray]{0.8}1&0&0&0&\cellcolor[gray]{0.8}1&0&\cellcolor[gray]{0.8}1\\
&&&&&&&&&&&&&&&&&&&&&&&&\\[-3.7mm]
\hline
\multirow{6}{*}{\large $S_{6}$}
&&&&&&&&&&&&&&&&&&&&&&&&\\[-3.7mm]
&0&0&0&0&0&0&0&0&0&0&0&0&\cellcolor[gray]{0.8}1&\cellcolor[gray]{0.8}1&\cellcolor[gray]{0.8}1&\cellcolor[gray]{0.8}1&\cellcolor[gray]{0.8}1&\cellcolor[gray]{0.8}1&\cellcolor[gray]{0.8}1&\cellcolor[gray]{0.8}1&\cellcolor[gray]{0.8}1&\cellcolor[gray]{0.8}1&\cellcolor[gray]{0.8}1&\cellcolor[gray]{0.8}1\\
&0&0&\cellcolor[gray]{0.8}1&\cellcolor[gray]{0.8}1&\cellcolor[gray]{0.8}1&\cellcolor[gray]{0.8}1&0&0&\cellcolor[gray]{0.8}1&\cellcolor[gray]{0.8}1&\cellcolor[gray]{0.8}1&\cellcolor[gray]{0.8}1&0&0&0&0&\cellcolor[gray]{0.8}1&\cellcolor[gray]{0.8}1&0&0&0&0&\cellcolor[gray]{0.8}1&\cellcolor[gray]{0.8}1\\
&\cellcolor[gray]{0.8}1&\cellcolor[gray]{0.8}1&0&\cellcolor[gray]{0.8}1&0&\cellcolor[gray]{0.8}1&\cellcolor[gray]{0.8}1&\cellcolor[gray]{0.8}1&0&\cellcolor[gray]{0.8}1&0&\cellcolor[gray]{0.8}1&0&\cellcolor[gray]{0.8}1&0&\cellcolor[gray]{0.8}1&0&0&0&\cellcolor[gray]{0.8}1&0&\cellcolor[gray]{0.8}1&0&0\\
&\cellcolor[gray]{0.8}1&\cellcolor[gray]{0.8}1&\cellcolor[gray]{0.8}1&0&\cellcolor[gray]{0.8}1&0&\cellcolor[gray]{0.8}1&\cellcolor[gray]{0.8}1&\cellcolor[gray]{0.8}1&0&\cellcolor[gray]{0.8}1&0&\cellcolor[gray]{0.8}1&0&\cellcolor[gray]{0.8}1&0&0&0&\cellcolor[gray]{0.8}1&0&\cellcolor[gray]{0.8}1&0&0&0\\
&&&&&&&&&&&&&&&&&&&&&&&&\\[-3.7mm]
\hline
\multirow{6}{*}{\large $S_{7}$}
&&&&&&&&&&&&&&&&&&&&&&&&\\[-3.7mm]
&0&0&0&0&0&0&\cellcolor[gray]{0.8}1&\cellcolor[gray]{0.8}1&\cellcolor[gray]{0.8}1&\cellcolor[gray]{0.8}1&\cellcolor[gray]{0.8}1&\cellcolor[gray]{0.8}1&0&0&0&0&0&0&\cellcolor[gray]{0.8}1&\cellcolor[gray]{0.8}1&\cellcolor[gray]{0.8}1&\cellcolor[gray]{0.8}1&\cellcolor[gray]{0.8}1&\cellcolor[gray]{0.8}1\\
&\cellcolor[gray]{0.8}1&\cellcolor[gray]{0.8}1&0&0&\cellcolor[gray]{0.8}1&\cellcolor[gray]{0.8}1&0&0&0&0&\cellcolor[gray]{0.8}1&\cellcolor[gray]{0.8}1&0&0&\cellcolor[gray]{0.8}1&\cellcolor[gray]{0.8}1&\cellcolor[gray]{0.8}1&\cellcolor[gray]{0.8}1&0&0&\cellcolor[gray]{0.8}1&\cellcolor[gray]{0.8}1&0&0\\
&0&\cellcolor[gray]{0.8}1&\cellcolor[gray]{0.8}1&\cellcolor[gray]{0.8}1&\cellcolor[gray]{0.8}1&0&0&\cellcolor[gray]{0.8}1&0&\cellcolor[gray]{0.8}1&0&0&\cellcolor[gray]{0.8}1&\cellcolor[gray]{0.8}1&0&\cellcolor[gray]{0.8}1&0&\cellcolor[gray]{0.8}1&\cellcolor[gray]{0.8}1&0&0&0&0&\cellcolor[gray]{0.8}1\\
&\cellcolor[gray]{0.8}1&0&\cellcolor[gray]{0.8}1&\cellcolor[gray]{0.8}1&0&\cellcolor[gray]{0.8}1&\cellcolor[gray]{0.8}1&0&\cellcolor[gray]{0.8}1&0&0&0&\cellcolor[gray]{0.8}1&\cellcolor[gray]{0.8}1&\cellcolor[gray]{0.8}1&0&\cellcolor[gray]{0.8}1&0&0&\cellcolor[gray]{0.8}1&0&0&\cellcolor[gray]{0.8}1&0\\
&&&&&&&&&&&&&&&&&&&&&&&&\\[-3.7mm]
\hline
\multirow{6}{*}{\large $S_{8}$}
&&&&&&&&&&&&&&&&&&&&&&&&\\[-3.7mm]
&\cellcolor[gray]{0.8}1&\cellcolor[gray]{0.8}1&\cellcolor[gray]{0.8}1&\cellcolor[gray]{0.8}1&\cellcolor[gray]{0.8}1&\cellcolor[gray]{0.8}1&0&0&0&0&0&0&0&0&0&0&0&0&\cellcolor[gray]{0.8}1&\cellcolor[gray]{0.8}1&\cellcolor[gray]{0.8}1&\cellcolor[gray]{0.8}1&\cellcolor[gray]{0.8}1&\cellcolor[gray]{0.8}1\\
&0&0&0&0&\cellcolor[gray]{0.8}1&\cellcolor[gray]{0.8}1&\cellcolor[gray]{0.8}1&\cellcolor[gray]{0.8}1&0&0&\cellcolor[gray]{0.8}1&\cellcolor[gray]{0.8}1&\cellcolor[gray]{0.8}1&\cellcolor[gray]{0.8}1&0&0&\cellcolor[gray]{0.8}1&\cellcolor[gray]{0.8}1&\cellcolor[gray]{0.8}1&\cellcolor[gray]{0.8}1&0&0&0&0\\
&0&\cellcolor[gray]{0.8}1&0&\cellcolor[gray]{0.8}1&0&0&0&\cellcolor[gray]{0.8}1&\cellcolor[gray]{0.8}1&\cellcolor[gray]{0.8}1&\cellcolor[gray]{0.8}1&0&0&\cellcolor[gray]{0.8}1&\cellcolor[gray]{0.8}1&\cellcolor[gray]{0.8}1&\cellcolor[gray]{0.8}1&0&0&0&\cellcolor[gray]{0.8}1&0&\cellcolor[gray]{0.8}1&0\\
&\cellcolor[gray]{0.8}1&0&\cellcolor[gray]{0.8}1&0&0&0&\cellcolor[gray]{0.8}1&0&\cellcolor[gray]{0.8}1&\cellcolor[gray]{0.8}1&0&\cellcolor[gray]{0.8}1&\cellcolor[gray]{0.8}1&0&\cellcolor[gray]{0.8}1&\cellcolor[gray]{0.8}1&0&\cellcolor[gray]{0.8}1&0&0&0&\cellcolor[gray]{0.8}1&0&\cellcolor[gray]{0.8}1\\
&&&&&&&&&&&&&&&&&&&&&&&&\\[-3.7mm]
\hline
\multirow{6}{*}{\large $S_{9}$}
&&&&&&&&&&&&&&&&&&&&&&&&\\[-3.7mm]
&0&0&0&0&0&0&\cellcolor[gray]{0.8}1&\cellcolor[gray]{0.8}1&\cellcolor[gray]{0.8}1&\cellcolor[gray]{0.8}1&\cellcolor[gray]{0.8}1&\cellcolor[gray]{0.8}1&\cellcolor[gray]{0.8}1&\cellcolor[gray]{0.8}1&\cellcolor[gray]{0.8}1&\cellcolor[gray]{0.8}1&\cellcolor[gray]{0.8}1&\cellcolor[gray]{0.8}1&0&0&0&0&0&0\\
&\cellcolor[gray]{0.8}1&\cellcolor[gray]{0.8}1&\cellcolor[gray]{0.8}1&\cellcolor[gray]{0.8}1&0&0&0&0&\cellcolor[gray]{0.8}1&\cellcolor[gray]{0.8}1&0&0&0&0&\cellcolor[gray]{0.8}1&\cellcolor[gray]{0.8}1&0&0&0&0&\cellcolor[gray]{0.8}1&\cellcolor[gray]{0.8}1&\cellcolor[gray]{0.8}1&\cellcolor[gray]{0.8}1\\
&\cellcolor[gray]{0.8}1&0&\cellcolor[gray]{0.8}1&0&\cellcolor[gray]{0.8}1&\cellcolor[gray]{0.8}1&\cellcolor[gray]{0.8}1&0&0&0&0&\cellcolor[gray]{0.8}1&\cellcolor[gray]{0.8}1&0&0&0&0&\cellcolor[gray]{0.8}1&\cellcolor[gray]{0.8}1&\cellcolor[gray]{0.8}1&0&\cellcolor[gray]{0.8}1&0&\cellcolor[gray]{0.8}1\\
&0&\cellcolor[gray]{0.8}1&0&\cellcolor[gray]{0.8}1&\cellcolor[gray]{0.8}1&\cellcolor[gray]{0.8}1&0&\cellcolor[gray]{0.8}1&0&0&\cellcolor[gray]{0.8}1&0&0&\cellcolor[gray]{0.8}1&0&0&\cellcolor[gray]{0.8}1&0&\cellcolor[gray]{0.8}1&\cellcolor[gray]{0.8}1&\cellcolor[gray]{0.8}1&0&\cellcolor[gray]{0.8}1&0\\
&&&&&&&&&&&&&&&&&&&&&&&&\\[-3.7mm]
\hline
\multirow{6}{*}{\large $\underline{S_{10}}$}
&&&&&&&&&&&&&&&&&&&&&&&&\\[-3.7mm]
&\cellcolor[gray]{0.8}1&\cellcolor[gray]{0.8}1&\cellcolor[gray]{0.8}1&\cellcolor[gray]{0.8}1&\cellcolor[gray]{0.8}1&\cellcolor[gray]{0.8}1&0&0&0&0&0&0&\cellcolor[gray]{0.8}1&\cellcolor[gray]{0.8}1&\cellcolor[gray]{0.8}1&\cellcolor[gray]{0.8}1&\cellcolor[gray]{0.8}1&\cellcolor[gray]{0.8}1&0&0&0&0&0&0\\
&0&0&\cellcolor[gray]{0.8}1&\cellcolor[gray]{0.8}1&0&0&\cellcolor[gray]{0.8}1&\cellcolor[gray]{0.8}1&\cellcolor[gray]{0.8}1&\cellcolor[gray]{0.8}1&0&0&\cellcolor[gray]{0.8}1&\cellcolor[gray]{0.8}1&0&0&0&0&\cellcolor[gray]{0.8}1&\cellcolor[gray]{0.8}1&0&0&\cellcolor[gray]{0.8}1&\cellcolor[gray]{0.8}1\\
&\cellcolor[gray]{0.8}1&0&0&0&0&\cellcolor[gray]{0.8}1&\cellcolor[gray]{0.8}1&0&\cellcolor[gray]{0.8}1&0&\cellcolor[gray]{0.8}1&\cellcolor[gray]{0.8}1&0&0&\cellcolor[gray]{0.8}1&0&\cellcolor[gray]{0.8}1&0&0&\cellcolor[gray]{0.8}1&\cellcolor[gray]{0.8}1&\cellcolor[gray]{0.8}1&\cellcolor[gray]{0.8}1&0\\
&0&\cellcolor[gray]{0.8}1&0&0&\cellcolor[gray]{0.8}1&0&0&\cellcolor[gray]{0.8}1&0&\cellcolor[gray]{0.8}1&\cellcolor[gray]{0.8}1&\cellcolor[gray]{0.8}1&0&0&0&\cellcolor[gray]{0.8}1&0&\cellcolor[gray]{0.8}1&\cellcolor[gray]{0.8}1&0&\cellcolor[gray]{0.8}1&\cellcolor[gray]{0.8}1&0&\cellcolor[gray]{0.8}1\\
&&&&&&&&&&&&&&&&&&&&&&&&\\[-3.7mm]
\hline
\multirow{6}{*}{\large $S_{11}$}
&&&&&&&&&&&&&&&&&&&&&&&&\\[-3.7mm]
&\cellcolor[gray]{0.8}1&\cellcolor[gray]{0.8}1&\cellcolor[gray]{0.8}1&\cellcolor[gray]{0.8}1&\cellcolor[gray]{0.8}1&\cellcolor[gray]{0.8}1&\cellcolor[gray]{0.8}1&\cellcolor[gray]{0.8}1&\cellcolor[gray]{0.8}1&\cellcolor[gray]{0.8}1&\cellcolor[gray]{0.8}1&\cellcolor[gray]{0.8}1&0&0&0&0&0&0&0&0&0&0&0&0\\
&\cellcolor[gray]{0.8}1&\cellcolor[gray]{0.8}1&0&0&0&0&\cellcolor[gray]{0.8}1&\cellcolor[gray]{0.8}1&0&0&0&0&\cellcolor[gray]{0.8}1&\cellcolor[gray]{0.8}1&\cellcolor[gray]{0.8}1&\cellcolor[gray]{0.8}1&0&0&\cellcolor[gray]{0.8}1&\cellcolor[gray]{0.8}1&\cellcolor[gray]{0.8}1&\cellcolor[gray]{0.8}1&0&0\\
&0&0&\cellcolor[gray]{0.8}1&0&\cellcolor[gray]{0.8}1&0&0&0&\cellcolor[gray]{0.8}1&0&\cellcolor[gray]{0.8}1&0&\cellcolor[gray]{0.8}1&0&\cellcolor[gray]{0.8}1&0&\cellcolor[gray]{0.8}1&\cellcolor[gray]{0.8}1&\cellcolor[gray]{0.8}1&0&\cellcolor[gray]{0.8}1&0&\cellcolor[gray]{0.8}1&\cellcolor[gray]{0.8}1\\
&0&0&0&\cellcolor[gray]{0.8}1&0&\cellcolor[gray]{0.8}1&0&0&0&\cellcolor[gray]{0.8}1&0&\cellcolor[gray]{0.8}1&0&\cellcolor[gray]{0.8}1&0&\cellcolor[gray]{0.8}1&\cellcolor[gray]{0.8}1&\cellcolor[gray]{0.8}1&0&\cellcolor[gray]{0.8}1&0&\cellcolor[gray]{0.8}1&\cellcolor[gray]{0.8}1&\cellcolor[gray]{0.8}1\\
&&&&&&&&&&&&&&&&&&&&&&&&\\[-3.7mm]
\hline
\multirow{6}{*}{\large $S_{12}$}
&&&&&&&&&&&&&&&&&&&&&&&&\\[-3.7mm]
&0&0&0&0&0&0&\cellcolor[gray]{0.8}1&\cellcolor[gray]{0.8}1&\cellcolor[gray]{0.8}1&\cellcolor[gray]{0.8}1&\cellcolor[gray]{0.8}1&\cellcolor[gray]{0.8}1&\cellcolor[gray]{0.8}1&\cellcolor[gray]{0.8}1&\cellcolor[gray]{0.8}1&\cellcolor[gray]{0.8}1&\cellcolor[gray]{0.8}1&\cellcolor[gray]{0.8}1&\cellcolor[gray]{0.8}1&\cellcolor[gray]{0.8}1&\cellcolor[gray]{0.8}1&\cellcolor[gray]{0.8}1&\cellcolor[gray]{0.8}1&\cellcolor[gray]{0.8}1\\
&\cellcolor[gray]{0.8}1&\cellcolor[gray]{0.8}1&\cellcolor[gray]{0.8}1&\cellcolor[gray]{0.8}1&\cellcolor[gray]{0.8}1&\cellcolor[gray]{0.8}1&0&0&\cellcolor[gray]{0.8}1&\cellcolor[gray]{0.8}1&\cellcolor[gray]{0.8}1&\cellcolor[gray]{0.8}1&0&0&\cellcolor[gray]{0.8}1&\cellcolor[gray]{0.8}1&\cellcolor[gray]{0.8}1&\cellcolor[gray]{0.8}1&0&0&\cellcolor[gray]{0.8}1&\cellcolor[gray]{0.8}1&\cellcolor[gray]{0.8}1&\cellcolor[gray]{0.8}1\\
&\cellcolor[gray]{0.8}1&\cellcolor[gray]{0.8}1&\cellcolor[gray]{0.8}1&\cellcolor[gray]{0.8}1&\cellcolor[gray]{0.8}1&\cellcolor[gray]{0.8}1&\cellcolor[gray]{0.8}1&\cellcolor[gray]{0.8}1&0&\cellcolor[gray]{0.8}1&0&\cellcolor[gray]{0.8}1&\cellcolor[gray]{0.8}1&\cellcolor[gray]{0.8}1&0&\cellcolor[gray]{0.8}1&0&\cellcolor[gray]{0.8}1&\cellcolor[gray]{0.8}1&\cellcolor[gray]{0.8}1&0&\cellcolor[gray]{0.8}1&0&\cellcolor[gray]{0.8}1\\
&\cellcolor[gray]{0.8}1&\cellcolor[gray]{0.8}1&\cellcolor[gray]{0.8}1&\cellcolor[gray]{0.8}1&\cellcolor[gray]{0.8}1&\cellcolor[gray]{0.8}1&\cellcolor[gray]{0.8}1&\cellcolor[gray]{0.8}1&\cellcolor[gray]{0.8}1&0&\cellcolor[gray]{0.8}1&0&\cellcolor[gray]{0.8}1&\cellcolor[gray]{0.8}1&\cellcolor[gray]{0.8}1&0&\cellcolor[gray]{0.8}1&0&\cellcolor[gray]{0.8}1&\cellcolor[gray]{0.8}1&\cellcolor[gray]{0.8}1&0&\cellcolor[gray]{0.8}1&0\\
&&&&&&&&&&&&&&&&&&&&&&&&\\[-3.7mm]
\hline
\multirow{6}{*}{\large $\underline{S_{13}}$}
&&&&&&&&&&&&&&&&&&&&&&&&\\[-3.7mm]
&\cellcolor[gray]{0.8}1&\cellcolor[gray]{0.8}1&\cellcolor[gray]{0.8}1&\cellcolor[gray]{0.8}1&\cellcolor[gray]{0.8}1&\cellcolor[gray]{0.8}1&0&0&0&0&0&0&\cellcolor[gray]{0.8}1&\cellcolor[gray]{0.8}1&\cellcolor[gray]{0.8}1&\cellcolor[gray]{0.8}1&\cellcolor[gray]{0.8}1&\cellcolor[gray]{0.8}1&\cellcolor[gray]{0.8}1&\cellcolor[gray]{0.8}1&\cellcolor[gray]{0.8}1&\cellcolor[gray]{0.8}1&\cellcolor[gray]{0.8}1&\cellcolor[gray]{0.8}1\\
&0&0&\cellcolor[gray]{0.8}1&\cellcolor[gray]{0.8}1&\cellcolor[gray]{0.8}1&\cellcolor[gray]{0.8}1&\cellcolor[gray]{0.8}1&\cellcolor[gray]{0.8}1&\cellcolor[gray]{0.8}1&\cellcolor[gray]{0.8}1&\cellcolor[gray]{0.8}1&\cellcolor[gray]{0.8}1&\cellcolor[gray]{0.8}1&\cellcolor[gray]{0.8}1&0&0&\cellcolor[gray]{0.8}1&\cellcolor[gray]{0.8}1&\cellcolor[gray]{0.8}1&\cellcolor[gray]{0.8}1&0&0&\cellcolor[gray]{0.8}1&\cellcolor[gray]{0.8}1\\
&\cellcolor[gray]{0.8}1&\cellcolor[gray]{0.8}1&0&\cellcolor[gray]{0.8}1&0&\cellcolor[gray]{0.8}1&\cellcolor[gray]{0.8}1&\cellcolor[gray]{0.8}1&\cellcolor[gray]{0.8}1&\cellcolor[gray]{0.8}1&\cellcolor[gray]{0.8}1&\cellcolor[gray]{0.8}1&0&\cellcolor[gray]{0.8}1&\cellcolor[gray]{0.8}1&\cellcolor[gray]{0.8}1&\cellcolor[gray]{0.8}1&0&0&\cellcolor[gray]{0.8}1&\cellcolor[gray]{0.8}1&\cellcolor[gray]{0.8}1&\cellcolor[gray]{0.8}1&0\\
&\cellcolor[gray]{0.8}1&\cellcolor[gray]{0.8}1&\cellcolor[gray]{0.8}1&0&\cellcolor[gray]{0.8}1&0&\cellcolor[gray]{0.8}1&\cellcolor[gray]{0.8}1&\cellcolor[gray]{0.8}1&\cellcolor[gray]{0.8}1&\cellcolor[gray]{0.8}1&\cellcolor[gray]{0.8}1&\cellcolor[gray]{0.8}1&0&\cellcolor[gray]{0.8}1&\cellcolor[gray]{0.8}1&0&\cellcolor[gray]{0.8}1&\cellcolor[gray]{0.8}1&0&\cellcolor[gray]{0.8}1&\cellcolor[gray]{0.8}1&0&\cellcolor[gray]{0.8}1\\
&&&&&&&&&&&&&&&&&&&&&&&&\\[-3.7mm]
\hline
\multirow{6}{*}{\large $S_{14}$}
&&&&&&&&&&&&&&&&&&&&&&&&\\[-3.7mm]
&\cellcolor[gray]{0.8}1&\cellcolor[gray]{0.8}1&\cellcolor[gray]{0.8}1&\cellcolor[gray]{0.8}1&\cellcolor[gray]{0.8}1&\cellcolor[gray]{0.8}1&\cellcolor[gray]{0.8}1&\cellcolor[gray]{0.8}1&\cellcolor[gray]{0.8}1&\cellcolor[gray]{0.8}1&\cellcolor[gray]{0.8}1&\cellcolor[gray]{0.8}1&0&0&0&0&0&0&\cellcolor[gray]{0.8}1&\cellcolor[gray]{0.8}1&\cellcolor[gray]{0.8}1&\cellcolor[gray]{0.8}1&\cellcolor[gray]{0.8}1&\cellcolor[gray]{0.8}1\\
&\cellcolor[gray]{0.8}1&\cellcolor[gray]{0.8}1&0&0&\cellcolor[gray]{0.8}1&\cellcolor[gray]{0.8}1&\cellcolor[gray]{0.8}1&\cellcolor[gray]{0.8}1&0&0&\cellcolor[gray]{0.8}1&\cellcolor[gray]{0.8}1&\cellcolor[gray]{0.8}1&\cellcolor[gray]{0.8}1&\cellcolor[gray]{0.8}1&\cellcolor[gray]{0.8}1&\cellcolor[gray]{0.8}1&\cellcolor[gray]{0.8}1&\cellcolor[gray]{0.8}1&\cellcolor[gray]{0.8}1&\cellcolor[gray]{0.8}1&\cellcolor[gray]{0.8}1&0&0\\
&0&\cellcolor[gray]{0.8}1&\cellcolor[gray]{0.8}1&\cellcolor[gray]{0.8}1&\cellcolor[gray]{0.8}1&0&0&\cellcolor[gray]{0.8}1&\cellcolor[gray]{0.8}1&\cellcolor[gray]{0.8}1&\cellcolor[gray]{0.8}1&0&\cellcolor[gray]{0.8}1&\cellcolor[gray]{0.8}1&\cellcolor[gray]{0.8}1&\cellcolor[gray]{0.8}1&\cellcolor[gray]{0.8}1&\cellcolor[gray]{0.8}1&\cellcolor[gray]{0.8}1&0&\cellcolor[gray]{0.8}1&0&\cellcolor[gray]{0.8}1&\cellcolor[gray]{0.8}1\\
&\cellcolor[gray]{0.8}1&0&\cellcolor[gray]{0.8}1&\cellcolor[gray]{0.8}1&0&\cellcolor[gray]{0.8}1&\cellcolor[gray]{0.8}1&0&\cellcolor[gray]{0.8}1&\cellcolor[gray]{0.8}1&0&\cellcolor[gray]{0.8}1&\cellcolor[gray]{0.8}1&\cellcolor[gray]{0.8}1&\cellcolor[gray]{0.8}1&\cellcolor[gray]{0.8}1&\cellcolor[gray]{0.8}1&\cellcolor[gray]{0.8}1&0&\cellcolor[gray]{0.8}1&0&\cellcolor[gray]{0.8}1&\cellcolor[gray]{0.8}1&\cellcolor[gray]{0.8}1\\
&&&&&&&&&&&&&&&&&&&&&&&&\\[-3.7mm]
\hline
\multirow{6}{*}{\large $S_{15}$}
&&&&&&&&&&&&&&&&&&&&&&&&\\[-3.7mm]
&\cellcolor[gray]{0.8}1&\cellcolor[gray]{0.8}1&\cellcolor[gray]{0.8}1&\cellcolor[gray]{0.8}1&\cellcolor[gray]{0.8}1&\cellcolor[gray]{0.8}1&\cellcolor[gray]{0.8}1&\cellcolor[gray]{0.8}1&\cellcolor[gray]{0.8}1&\cellcolor[gray]{0.8}1&\cellcolor[gray]{0.8}1&\cellcolor[gray]{0.8}1&\cellcolor[gray]{0.8}1&\cellcolor[gray]{0.8}1&\cellcolor[gray]{0.8}1&\cellcolor[gray]{0.8}1&\cellcolor[gray]{0.8}1&\cellcolor[gray]{0.8}1&0&0&0&0&0&0\\
&\cellcolor[gray]{0.8}1&\cellcolor[gray]{0.8}1&\cellcolor[gray]{0.8}1&\cellcolor[gray]{0.8}1&0&0&\cellcolor[gray]{0.8}1&\cellcolor[gray]{0.8}1&\cellcolor[gray]{0.8}1&\cellcolor[gray]{0.8}1&0&0&\cellcolor[gray]{0.8}1&\cellcolor[gray]{0.8}1&\cellcolor[gray]{0.8}1&\cellcolor[gray]{0.8}1&0&0&\cellcolor[gray]{0.8}1&\cellcolor[gray]{0.8}1&\cellcolor[gray]{0.8}1&\cellcolor[gray]{0.8}1&\cellcolor[gray]{0.8}1&\cellcolor[gray]{0.8}1\\
&\cellcolor[gray]{0.8}1&0&\cellcolor[gray]{0.8}1&0&\cellcolor[gray]{0.8}1&\cellcolor[gray]{0.8}1&\cellcolor[gray]{0.8}1&0&\cellcolor[gray]{0.8}1&0&\cellcolor[gray]{0.8}1&\cellcolor[gray]{0.8}1&\cellcolor[gray]{0.8}1&0&\cellcolor[gray]{0.8}1&0&\cellcolor[gray]{0.8}1&\cellcolor[gray]{0.8}1&\cellcolor[gray]{0.8}1&\cellcolor[gray]{0.8}1&\cellcolor[gray]{0.8}1&\cellcolor[gray]{0.8}1&\cellcolor[gray]{0.8}1&\cellcolor[gray]{0.8}1\\
&0&\cellcolor[gray]{0.8}1&0&\cellcolor[gray]{0.8}1&\cellcolor[gray]{0.8}1&\cellcolor[gray]{0.8}1&0&\cellcolor[gray]{0.8}1&0&\cellcolor[gray]{0.8}1&\cellcolor[gray]{0.8}1&\cellcolor[gray]{0.8}1&0&\cellcolor[gray]{0.8}1&0&\cellcolor[gray]{0.8}1&\cellcolor[gray]{0.8}1&\cellcolor[gray]{0.8}1&\cellcolor[gray]{0.8}1&\cellcolor[gray]{0.8}1&\cellcolor[gray]{0.8}1&\cellcolor[gray]{0.8}1&\cellcolor[gray]{0.8}1&\cellcolor[gray]{0.8}1\\
&&&&&&&&&&&&&&&&&&&&&&&&\\[-3.7mm]
\hline
\multirow{5}{*}{\large $S_{16}$}
&&&&&&&&&&&&&&&&&&&&&&&&\\[-3.7mm]
&\cellcolor[gray]{0.8}1&\cellcolor[gray]{0.8}1&\cellcolor[gray]{0.8}1&\cellcolor[gray]{0.8}1&\cellcolor[gray]{0.8}1&\cellcolor[gray]{0.8}1&\cellcolor[gray]{0.8}1&\cellcolor[gray]{0.8}1&\cellcolor[gray]{0.8}1&\cellcolor[gray]{0.8}1&\cellcolor[gray]{0.8}1&\cellcolor[gray]{0.8}1&\cellcolor[gray]{0.8}1&\cellcolor[gray]{0.8}1&\cellcolor[gray]{0.8}1&\cellcolor[gray]{0.8}1&\cellcolor[gray]{0.8}1&\cellcolor[gray]{0.8}1&\cellcolor[gray]{0.8}1&\cellcolor[gray]{0.8}1&\cellcolor[gray]{0.8}1&\cellcolor[gray]{0.8}1&\cellcolor[gray]{0.8}1&\cellcolor[gray]{0.8}1\\
&\cellcolor[gray]{0.8}1&\cellcolor[gray]{0.8}1&\cellcolor[gray]{0.8}1&\cellcolor[gray]{0.8}1&\cellcolor[gray]{0.8}1&\cellcolor[gray]{0.8}1&\cellcolor[gray]{0.8}1&\cellcolor[gray]{0.8}1&\cellcolor[gray]{0.8}1&\cellcolor[gray]{0.8}1&\cellcolor[gray]{0.8}1&\cellcolor[gray]{0.8}1&\cellcolor[gray]{0.8}1&\cellcolor[gray]{0.8}1&\cellcolor[gray]{0.8}1&\cellcolor[gray]{0.8}1&\cellcolor[gray]{0.8}1&\cellcolor[gray]{0.8}1&\cellcolor[gray]{0.8}1&\cellcolor[gray]{0.8}1&\cellcolor[gray]{0.8}1&\cellcolor[gray]{0.8}1&\cellcolor[gray]{0.8}1&\cellcolor[gray]{0.8}1\\
&\cellcolor[gray]{0.8}1&\cellcolor[gray]{0.8}1&\cellcolor[gray]{0.8}1&\cellcolor[gray]{0.8}1&\cellcolor[gray]{0.8}1&\cellcolor[gray]{0.8}1&\cellcolor[gray]{0.8}1&\cellcolor[gray]{0.8}1&\cellcolor[gray]{0.8}1&\cellcolor[gray]{0.8}1&\cellcolor[gray]{0.8}1&\cellcolor[gray]{0.8}1&\cellcolor[gray]{0.8}1&\cellcolor[gray]{0.8}1&\cellcolor[gray]{0.8}1&\cellcolor[gray]{0.8}1&\cellcolor[gray]{0.8}1&\cellcolor[gray]{0.8}1&\cellcolor[gray]{0.8}1&\cellcolor[gray]{0.8}1&\cellcolor[gray]{0.8}1&\cellcolor[gray]{0.8}1&\cellcolor[gray]{0.8}1&\cellcolor[gray]{0.8}1\\
&\cellcolor[gray]{0.8}1&\cellcolor[gray]{0.8}1&\cellcolor[gray]{0.8}1&\cellcolor[gray]{0.8}1&\cellcolor[gray]{0.8}1&\cellcolor[gray]{0.8}1&\cellcolor[gray]{0.8}1&\cellcolor[gray]{0.8}1&\cellcolor[gray]{0.8}1&\cellcolor[gray]{0.8}1&\cellcolor[gray]{0.8}1&\cellcolor[gray]{0.8}1&\cellcolor[gray]{0.8}1&\cellcolor[gray]{0.8}1&\cellcolor[gray]{0.8}1&\cellcolor[gray]{0.8}1&\cellcolor[gray]{0.8}1&\cellcolor[gray]{0.8}1&\cellcolor[gray]{0.8}1&\cellcolor[gray]{0.8}1&\cellcolor[gray]{0.8}1&\cellcolor[gray]{0.8}1&\cellcolor[gray]{0.8}1&\cellcolor[gray]{0.8}1\\
\hline
\end{tabular}
}
\end{table}
In Table \ref{tab:1} the carrier set has four elements, $S=\{\mathtt{a,b,c,d}\}$. There are 24 different sequences, i.e. one dimensional arrangements, of the elements of $S$ and these appear in the column headings of the table. There are 16 different binary data elements for $S$ and each row of Table \ref{tab:1} gives a particular data element under the 24 different one dimensional arrangements. Now let $T:=\{S_{5},S_{10},S_{13}\}$ be the set of typical data for $S$. Let us try to arrange the elements of $S$ in a way that achieves something similar to that exemplified by the definitive player problem. We can consider which sequence, or other arrangement, of the elements of the carrier set gives the most structured, transition free, interpretation of the typical data elements. The sequence $\mathtt{acdb}$ and its reverse $\mathtt{bdca}$ satisfy this requirement since under these arrangements, for each typical data element, the zeros and ones are unmixed. In the sequel we introduce relations to show how the typical data determines the structure on the carrier set. Since this structure is given by a symmetric relation, as opposed to an antisymmetric relation in the case of total orders, the problem of whether $T$ gives $\mathtt{acdb}$ or $\mathtt{bdca}$ as the definitive arrangement of the carrier set will be solved. We begin with the following standard definitions which will be particularly useful here.
\begin{definition}
\label{def:BRRSAT}
Let $S$ be a nonempty set. A {\em{binary relation}} on $S$ is a subset $R\subseteq S^{2}$ where $S^{2}:=\{(a,b):a\in S, b\in S\}$. For $a,b\in S$ we say that $a$ is $R$-{\em{related}} to $b$, and write $aRb$, precisely when $(a,b)\in R$. We say that $R$ is:
\begin{enumerate}
\item[(i)]
{\em{reflexive}} if $(a,a)\in R$ for all $a\in S$;
\item[(ii)]
{\em{symmetric}} if for every $(a,b)\in R$ we also have $(b,a)\in R$;
\item[(iii)]
{\em{antisymmetric}} if for every pair of distinct elements $a,b\in S$ at most one of $(a,b)$ and $(b,a)$ is an element of $R$;
\item[(iv)]
{\em{transitive}} if for every triple of elements $a,b,c\in S$ with $(a,b)\in R$ and $(b,c)\in R$ we also have $(a,c)\in R$;
\item[(v)]
an {\em{equivalence relation}} if $R$ is reflexive, symmetric and transitive.
\end{enumerate}
\end{definition}
There is a strong connection between the theory of relations on a set and graph theory. In the following definition we use some graph theory terminology.
\begin{definition}
\label{def:WRD}
Let $S$ be a nonempty finite set and $R\subseteq S^{2}$ a symmetric relation on $S$. For $a,b\in S$ a {\em{walk}} from $a$ to $b$, if one exists, is a finite sequence $(k_{i})_{i\in\{1,\cdots,n\}}$, $n\in\mathbb{N}$ is odd, such that:
\begin{enumerate}
\item[(w1)]
$k_{1}=a$ and $k_{n}=b$;
\item[(w2)]
we have $k_{i}\in S$ if $i$ is odd and $k_{i}\in R$ if $i$ is even;
\item[(w3)]
for $i$ even we have $k_{i}=(k_{i-1},k_{i+1})$.
\end{enumerate}
For $a,b\in S$ let $K_{a,b}$ denote the set of all walks from $a$ to $b$. The $R$-{\em{distance}} between two elements $a,b\in S$ is
\begin{equation}
\mathrm{d}_{R}(a,b):= \left\{ \begin{array} {l@{\quad\mbox{if}\quad}l}
\min\{\frac{n-1}{2}:(k_{i})_{i\in\{1,\cdots,n\}}\in K_{a,b}\} & K_{a,b}\mbox{ is nonempty} \\
\infty & K_{a,b}=\emptyset.
\end{array} \right. 
\end{equation}
\end{definition}
\begin{lemma}
\label{lem:RD}
Since $R$ is symmetric, the $R$-distance $\mathrm{d}_{R}$ defined in Definition \ref{def:WRD} is either a metric or an extended metric on $S$. By extended metric we mean a metric that takes non-negative values on the extended real line, $[-\infty,\infty]$.
\end{lemma}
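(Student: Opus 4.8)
\textit{Proof plan.} The plan is to check directly that $\mathrm{d}_R$ satisfies the metric axioms, permitting the value $\infty$; then $\mathrm{d}_R$ is a genuine metric exactly when it is everywhere finite (equivalently, when the graph of $R$ on $S$ is connected), and an extended metric otherwise. Non-negativity is immediate, since $\frac{n-1}{2}\geq 0$ for every odd $n\geq 1$ and $\infty\geq 0$. For the identity of indiscernibles I would first note that the one-term sequence $(a)$, taking $n=1$, satisfies (w1)--(w3): condition (w3) quantifies over even indices in $\{1\}$, of which there are none, so no appeal to reflexivity of $R$ is needed. Hence $(a)\in K_{a,a}$ and $\mathrm{d}_R(a,a)=0$. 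Conversely, $\mathrm{d}_R(a,b)=0$ forces a walk with $\frac{n-1}{2}=0$, i.e.\ $n=1$, so $a=k_1=k_n=b$.

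The heart of the lemma is symmetry, and this is the one place the hypothesis that $R$ is symmetric is used. Given a walk $(k_i)_{i\in\{1,\dots,n\}}$ from $a$ to $b$, the naive reversal $i\mapsto k_{n+1-i}$ fails (w3) because the elements of $R$ occurring at the even positions are \emph{ordered} pairs. I would repair this by reversing each such pair: define $k'_i:=k_{n+1-i}$ for $i$ odd and $k'_i:=(k_{n+2-i},k_{n-i})$ for $i$ even. The pair $k'_i$ lies in $R$ precisely because $(k_{n-i},k_{n+2-i})=k_{n+1-i}\in R$ and $R$ is symmetric; a short check on the parities of the indices then confirms (w1), (w2) and (w3) for $(k'_i)$. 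Thus $K_{a,b}\neq\emptyset$ iff $K_{b,a}\neq\emptyset$, with the same minimal value of $\frac{n-1}{2}$ on each side, so $\mathrm{d}_R(a,b)=\mathrm{d}_R(b,a)$ in all cases, including the infinite one.

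For the triangle inequality, if either $\mathrm{d}_R(a,b)$ or $\mathrm{d}_R(b,c)$ equals $\infty$ the inequality holds trivially under the usual extended-real arithmetic, so I would assume both are finite, say $p$ and $q$, and pick shortest walks $(k_1,\dots,k_{2p+1})$ from $a$ to $b$ and $(l_1,\dots,l_{2q+1})$ from $b$ to $c$. Since $k_{2p+1}=b=l_1$, concatenating into $(k_1,\dots,k_{2p+1},l_2,\dots,l_{2q+1})$ yields a sequence of odd length $2(p+q)+1$ whose index parities align so that (w1)--(w3) are inherited from the two pieces; this exhibits an element of $K_{a,c}$, whence $\mathrm{d}_R(a,c)\leq p+q=\mathrm{d}_R(a,b)+\mathrm{d}_R(b,c)$. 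Having verified non-negativity, the identity of indiscernibles, symmetry and the triangle inequality, $\mathrm{d}_R$ is an extended metric on $S$, and an ordinary metric precisely when it takes no infinite value. I expect the symmetry step to be the only nonroutine part of the argument; everything else reduces to parity- and index-bookkeeping.
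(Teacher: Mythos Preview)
Your proposal is correct and follows exactly the approach the paper indicates: the paper's proof is simply ``One checks the four standard metric axioms,'' and you have carried out that check in full, with the symmetry of $R$ used precisely where it must be. There is nothing to add.
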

\begin{proof}
One checks the four standard metric axioms.
\end{proof}
\begin{remark}
\label{rem:SPLEX}
Let $S$ be a nonempty finite set and $n:=\# S$. Then $S^{2}$ is the equivalence relation on $S$ with just one equivalence class. Whilst the graph diagram of a graph need not be unique, by applying uniformity principals for the lengths of edges and angles between adjacent edges, many graph diagrams are unique. For example, the graph diagram of $S$ with the relation $S^{2}$ is given by the edges and vertices, nodes, of the $n-1$ dimensional regular simplex e.g. for $n=4$ the simplex is a tetrahedron.
\end{remark}
In the sequel the following metric will also be useful.
\begin{lemma}
\label{lem:SDM}
Let $S$ be a nonempty finite set and let $2^{S^{2}}$ be the set of all binary relations on $S$, noting that this is the power set of $S^{2}$. Then
\begin{equation}
\mathrm{d}_{\Delta}(R,R'):=\# (R\Delta R'),\quad R,R'\in 2^{S^{2}},
\end{equation}
is a metric on $2^{S^{2}}$ where $R\Delta R':=(R\cup R')\backslash(R\cap R')$ is the symmetric difference of $R$ and $R'$. We call $\mathrm{d}_{\Delta}$ the symmetric difference metric on $2^{S^{2}}$.
\end{lemma}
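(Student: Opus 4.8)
The plan is to verify the four metric axioms directly, treating $\mathrm{d}_{\Delta}$ as a Hamming-type distance on the finite Boolean lattice $2^{S^{2}}$. Since $S$ is finite, $S^{2}$ is finite, so $\#(R\Delta R')$ is always a well-defined non-negative integer; this gives non-negativity for free. For the identity of indiscernibles I would observe that $\#(R\Delta R')=0$ holds precisely when $R\Delta R'=\emptyset$, and that $R\Delta R'=(R\cup R')\backslash(R\cap R')=\emptyset$ is equivalent to $R\cup R'=R\cap R'$, which forces $R=R'$. Symmetry, $\mathrm{d}_{\Delta}(R,R')=\mathrm{d}_{\Delta}(R',R)$, is immediate from the manifest symmetry of the operation $\Delta$ in its two arguments, since $R\cup R'=R'\cup R$ and $R\cap R'=R'\cap R$.

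The only nontrivial axiom is the triangle inequality, and the key step is the set-theoretic inclusion
\begin{equation}
R\Delta R''\subseteq (R\Delta R')\cup(R'\Delta R''),\qquad R,R',R''\in 2^{S^{2}}.
\end{equation}
I would prove this by element chasing: if $(a,b)\in R\Delta R''$ then $(a,b)$ lies in exactly one of $R$ and $R''$; examining whether $(a,b)\in R'$ or not, in either case $(a,b)$ must differ in membership between $R'$ and one of $R,R''$, hence $(a,b)\in R\Delta R'$ or $(a,b)\in R'\Delta R''$. (Equivalently, one may note $X\Delta Z=(X\Delta Y)\Delta(Y\Delta Z)$ and $U\Delta V\subseteq U\cup V$.) Taking cardinalities of both sides of the inclusion and using monotonicity of $\#$ together with the elementary bound $\#(A\cup B)\le \#A+\#B$ for finite sets yields
\begin{equation}
\mathrm{d}_{\Delta}(R,R'')\le \#\bigl((R\Delta R')\cup(R'\Delta R'')\bigr)\le \mathrm{d}_{\Delta}(R,R')+\mathrm{d}_{\Delta}(R',R''),
\end{equation}
which is the required inequality.

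The main obstacle, such as it is, is simply the verification of the inclusion displayed above; everything else is bookkeeping about cardinalities of finite sets. I do not expect any subtlety beyond being careful that all sets in sight are finite (guaranteed by the hypothesis that $S$ is finite), so that $\#$ behaves as an $\mathbb{N}$-valued, subadditive, monotone function. Once the inclusion is in hand the triangle inequality follows in one line, completing the check that $\mathrm{d}_{\Delta}$ is a genuine metric on $2^{S^{2}}$.
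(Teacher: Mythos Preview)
Your proposal is correct and is precisely the standard verification the paper has in mind: the paper's own proof consists of the single line ``Standard for $S^{2}$ finite'', so you have simply written out what that sentence stands for. There is nothing to add or compare.
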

\begin{proof}
Standard for $S^{2}$ finite.
\end{proof}
The following example shows how typical data determines a structure on the carrier set.
\begin{example}
\label{exa:SPTD}
With reference to Table \ref{tab:1}, again let $T=\{S_{5},S_{10},S_{13}\}$ be the set of typical data for $S=\{\mathtt{a,b,c,d}\}$. With reference to Definition \ref{def:CSTD}, we note that each typical data element $S_{i}=\{(a,f_{i}(a)):a\in S,\quad f_{i}:S\rightarrow\{0,1\}\}$ defines an equivalence relation on $S$ of the form
\begin{equation}
R(S_{i}):=\{(a,b):a,b\in S,\quad f_{i}(a)=f_{i}(b)\}.
\end{equation}
Hence from $T$ we obtain the relation tables in Figure \ref{fig:1}. Note that for numerical cell values use $1-|f_{i}(a)-f_{i}(b)|$ for $a,b\in S$.
\begin{figure}[ht]
\centering
\includegraphics[width=0.7\textwidth]{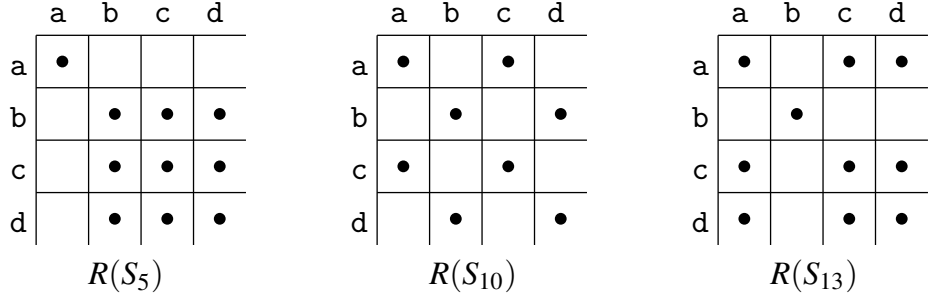}
\caption{The relation tables defined by the elements of $T$.}
\label{fig:1}
\end{figure}
Now we aggregate the relation tables in Figure \ref{fig:1} into a single weighted relation table $R_{T}$ by calculating the mean number of dots per table cell. Hence for $a,b\in S$, $R_{T}$ shows the proportion of equivalence relations defined by the elements of $T$ that have $a$ related to $b$. The table $R_{T}$ is shown in Figure \ref{fig:2}.
\begin{figure}[ht]
\centering
\includegraphics[width=0.7\textwidth]{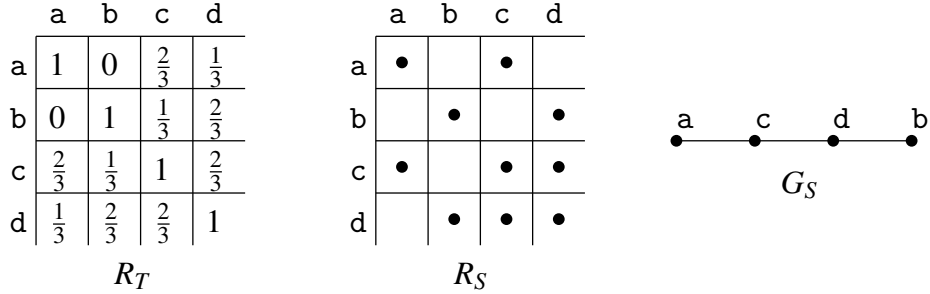}
\caption{The structure on the carrier set $S$ determined by $T$.}
\label{fig:2}
\end{figure}
Now, for a threshold value of $0.5$, we round the cell values of $R_{T}$ such that values greater than $0.5$ are rounded up to $1$ and values less than or equal to $0.5$ are rounded down to $0$. This results in the relation $R_{S}$. We note that a relation obtained in this way will always be symmetric but in general it need not be transitive. In particular $R_{S}$ is not transitive but since it is symmetric it defines a metric or an extended metric on $S$ by Lemma \ref{lem:RD}. Hence we will refer to $S$ with $R_{S}$, defined by $T$, as the {\em{carrier space}}.\\
The graph diagram of $S$ with the relation $R_{S}$ is given by $G_{S}$ in Figure \ref{fig:2}. Arguable $G_{S}$ is one dimensional and we note that it agrees with our discussion at the beginning of Section \ref{sec:1} since being a non-directed graph.\\
We note that as theory develops it might be useful to retain the weighted relation $R_{T}$ instead of only working with $R_{S}$. In particular one can obtain a hierarchy of relations from $R_{T}$ by varying the rounding threshold. However there are good reasons for choosing a rounding threshold of $0.5$. In particular $R_{S}$ is such that the mean of the distances between $R_{S}$ and the elements $R(S_{i})$ obtained from $T$ is minimized, that is
\begin{equation}
\label{equ:MEFE}
\frac{1}{\# T}\sum_{S_{i}\in T}\mathrm{d}_{\Delta}(R_{S},R(S_{i}))=\min\left\{\frac{1}{\# T}\sum_{S_{i}\in T}\mathrm{d}_{\Delta}(R,R(S_{i})):R\mbox{ is a relation on }S\right\}.
\end{equation}
In general $R_{S}$ need not be unique in this respect if the value $0.5$ appears in the relation table for $R_{T}$. We will shortly relate $R_{S}$ to something we will call float entropy which also supports a rounding threshold of $0.5$.
\end{example}
The following example uses typical data which defines a structure on the carrier set that is not one dimensional.
\begin{example}
\label{exa:SPTDTD}
With reference to Table \ref{tab:1}, let $T'=\{S_{6},S_{9},S_{16}\}$ be the set of typical data for $S':=S$ where $S$ is the carrier set of Example \ref{exa:SPTD}. Following the theory introduced in Example \ref{exa:SPTD} gives the results presented in Figure \ref{fig:3}.
\begin{figure}[ht]
\centering
\includegraphics[width=0.7\textwidth]{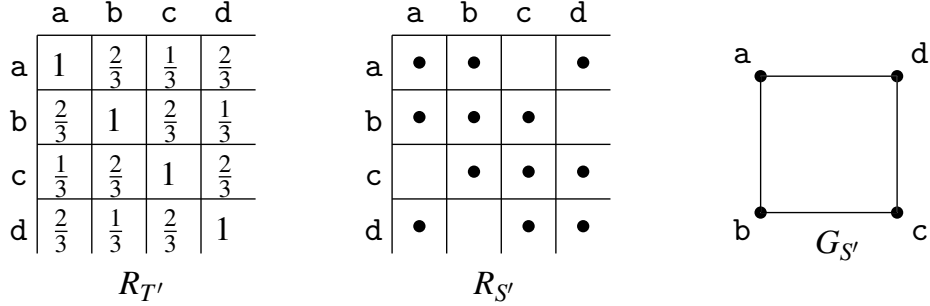}
\caption{The structure on the carrier set $S'$ determined by $T'$.}
\label{fig:3}
\end{figure}
\end{example}
\subsection{Float entropy}
\label{subsec:1FE}
In this short subsection we will discuss the notion of {\em{float entropy}}. Let $S$ be a carrier set, $T\subseteq\Omega_{S}$ the typical data of $S$ and $R$ a relation on $S$. Suppose we consider $T$ to be the set of possible messages that can be sent to a receiver. In standard information theory the receiver would also have a copy of $T$ so that sending a message only involves sending enough information to identify the intended element. Instead of this suppose that the receiver only has a copy of $S$ and $R$. For $S_{i}\in T$ if the relation $R(S_{i})$ is relatively close to $R$ with respect to $\mathrm{d}_{\Delta}$ then the number of bits that need to be sent to the receiver in order to specify $S_{i}$ will be relatively small. In this case $S_{i}$ is highly compressible, carries little information and is highly structured relative to $R$. We summarize this situation by saying that $S_{i}$  has low float entropy relative to $R$. The extreme case of minimum float entropy occurs when $R(S_{i})=R$  which is possible if $R$ is an equivalence relation. With reference to Definition \ref{def:CSTD} we can quantify float entropy relative to a given relation $R$ as follows,
\begin{equation}
\label{equ:QFE}
\mathrm{fe}(R,S_{i}):=\log_{2}(\#\{S_{j}\in\Omega_{S}:\mathrm{d}_{\Delta}(R,R(S_{j}))\leq\mathrm{d}_{\Delta}(R,R(S_{i}))\}).
\end{equation}
This is a measure in bits of the amount of information required to specify $S_{i}$ under the assumption that what is being specified ought to be highly structured relative to $R$. We can consider some values for examples \ref{exa:SPTD} and \ref{exa:SPTDTD}. Recall that in Example \ref{exa:SPTD} we have $T=\{S_{5},S_{10},S_{13}\}$ and in Example \ref{exa:SPTDTD} $T'=\{S_{6},S_{9},S_{16}\}$. For $R_{S}$ from Example \ref{exa:SPTD} we have $\mathrm{fe}(R_{S},S_{10})=1$ and $\mathrm{fe}(R_{S},S_{5})=\mathrm{fe}(R_{S},S_{13})=2.58$ to two decimal places whereas in contrast $\mathrm{fe}(R_{S},S_{9})=4$. We will denote the mean of the float entropies for the elements of $T$ with respect to $ R_{S}$ by $\mathrm{fe}(R_{S},T)$ and extend this notation to $T'$ and $R_{S'}$ from Example \ref{exa:SPTDTD} accordingly. Working to 2dp throughout gives $\mathrm{fe}(R_{S},T)=2.06$ and $\mathrm{fe}(R_{S'},T')=2.58$ whereas $\mathrm{fe}(R_{S},T')=3.55$ and $\mathrm{fe}(R_{S'},T)=3.87$. Hence we see that the relations obtained by the method shown in the examples are, relative to their respective typical data, a good choice in order to minimize the mean float entropy.\\
Now let $S$ be the set of neurons of a brain and $T$ the set of brain states where a brain state is a possible and probable aggregate state of all the brain's neurons. If we are trying to approximate $T$ then ideally $T$ will be selected such that, as a random variable restricted to $T$, the brain has a uniform distribution over $T$. Further ideally $T$ should be large enough so that the probability of the brain being in a state that is close to at least one of the elements of $T$ is high. Under these conditions we note, by Equation \ref{equ:MEFE}, that setting $R:=R_{S}$ is a good choice in order to minimize the expected float entropy.\\
In the next section we will to some extent consider the possible relevance of the theory in Section \ref{sec:1} to the brain. We will also extend the theory to what we will call objects.
\section{The brain and relations between objects defined by typical data}
\label{sec:2}
Although our theory is to be considered for typical data elements of the state of the whole brain, we begin this section by considering the relevance of the theory to the primary visual cortex, V1. Associating the retina with the unit disc of the complex plain and similarly embedding the flattened cortical sheet of V1 into the complex plain, we note that the retino-cortical mapping to V1 on a given side of the brain is approximately logarithmic and is therefore far from being an isometry, see \cite{Balasubramanian} and \cite{Schwartz}. Hence the geometry of V1 cannot account for the perceived geometry of monocular vision. Furthermore, the right side of each retina is mapped to the right side of the brain whereas the left side of each retina is mapped to the left side of the brain. Hence the signals from a given retina go to two different brain areas. Despite this the perceived geometry produces a seamless isometric version of the image on the retina. Such facts underline the need for a theory such as that initiated in this paper since we need to explain how perceived geometry is defined by the brain.\\
Let $S$ be the set of neurons in V1. Further let $a'$ and $b'$ be two distinct points that are fixed relative to the eye in a person's field of view as depicted in Figure \ref{fig:4}.
\begin{figure}[ht]
\centering
\includegraphics[width=0.5\textwidth]{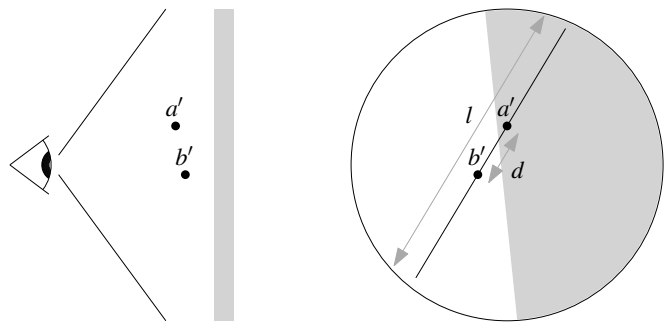}
\caption{Two fixed points in a person's field of view.}
\label{fig:4}
\end{figure}
Let $a$ be a neuron in V1 that is stimulated by the retina when there is stimulation of the retina from $a'$. Similarly let $b$ be a neuron in V1 that has the same relationship with $b'$. Consider the typical data $T$ for V1. We note that abrupt transition lines between light and dark or regions of different color are relatively sparse in the field of view. In a somewhat simplified analysis, suppose that there are usually no more than $n$ abrupt transition lines in the field of view. As depicted in Figure \ref{fig:4}, let $l$ be the length of the line through $a'$ and $b'$ crossing the field of view and $d$ the viewable distance between $a'$ and $b'$. Suppose that all $n$ transition lines intersect the line through $a'$ and $b'$. Then the probability $P_{n}$ that there is one or more transition lines between $a'$ and $b'$ is
\begin{equation}
P_{n}=1-\left(1-\frac{d}{l}\right)^{n}.
\end{equation}
We note that $\lim_{d\to 0}P_{n}=0$. Hence if $d$ is small then $a$ will be in the same state as $b$ in the majority of the typical data elements of T. On the other hand if $d$ is large then arguably $a$ and $b$ will rarely be in the same state. Therefore the relation $R_{S}$ on $S$ defined by $T$ ought to correspond well with the structure of the field of view. This claim is supported below by the results of a study using digital photographs to test how well the theory establishes relative pixel positions.\\
First though we note that evidence has been found for V1 that supports the BCM version of Hebbian theory, see \cite{BCM} and \cite{KRB}. Hebbian theory implies that if $a'$ and $b'$ are close together then stimulation of $a$ and stimulation of $b$ from within V1 ought to usually happen together. Therefore the typical data is typical of the states that V1 can internally generate by itself. Hence V1 defines $R_{S}$ and by doing so it defines the interpretation of the current state of V1. Whilst this is the case in theory further investigation is required when the full complexity of the visual system is considered.\\
Now a study was conducted using 105 digital photographs taken of everyday scenes using the same seven megapixel digital camera. A computer program centered a $5\times 5$ grid of sampling points over each photograph and recorded to which brightness class each point belonged. Here the grid points are the elements of $S$ whereas an element of $T$ is given by the values obtained for one of the photographs so that $\# T=105$. Two parameters are involved the first being the grid point spacing in pixels of adjacent grid points and the second being the number of brightness classes used. The second parameter is therefore the node repertoire and, apart from the fact that the repertoire was not restricted to two, everything proceeded as per examples \ref{exa:SPTD} and \ref{exa:SPTDTD}. Results showed that $R_{S}$ was close, with respect to $\mathrm{d}_{\Delta}$, to the relation for the grid provided that the parameters used corresponded to a point on the curve in Figure \ref{fig:5}.
\begin{figure}[ht]
\centering
\includegraphics[width=0.35\textwidth]{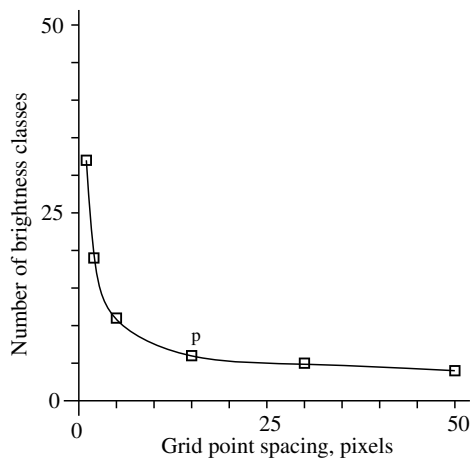}
\caption{Established parameter options.}
\label{fig:5}
\end{figure}
Now suppose we numerate the elements of $T$ from 1 to 105 and calculate $R_{S}$ after the first $n$ elements for $n\in\{1,5,10,15,\cdots,105\}$. Figure \ref{fig:6} shows how the acquired relation converged toward the relation for the grid as $n$ increased. The parameters used for Figure \ref{fig:6} are indicated by the point $\mathrm{p}$ in Figure \ref{fig:5}.
\begin{figure}[ht]
\centering
\includegraphics[width=0.4\textwidth]{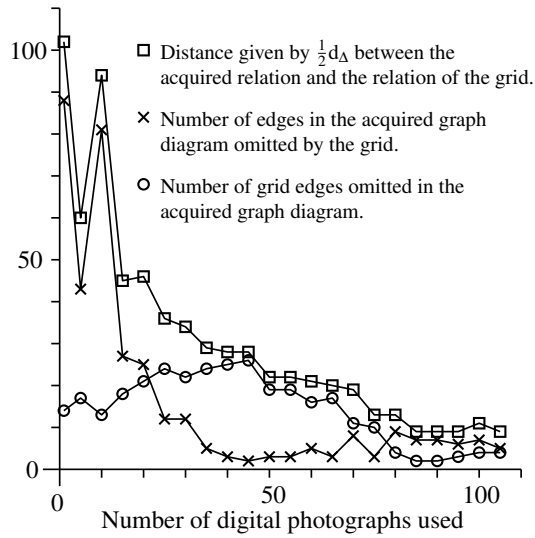}
\caption{Convergence to the relation for the grid.}
\label{fig:6}
\end{figure}
Further Figure~\ref{fig:7}, left, shows the graph diagram of the relation for the grid and, right, the edges given by the relation $R_{S}$ for $n=105$. Clearly convergence would be obtained for large enough $\# T$.
\begin{figure}[ht]
\centering
\includegraphics[width=0.5\textwidth]{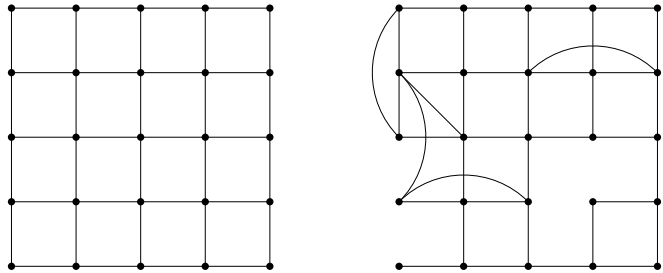}
\caption{The grid edges compared with the edges given by $R_{S}$.}
\label{fig:7}
\end{figure}
This works because whilst the content of the world around us is very varied it is nevertheless highly structured relative to the underlying geometry of the space. Brightness classes were used in the study so that the nodes, the grid points, would represent neurons in V1 that respond to rod cells in the retina. We should note that the rod cells are arranged more in the form of a hexagonal lattice than a grid. Further it would be interesting to repeat this study with each grid point split into three separate nodes giving one for each cone cell type so that $\# S=75$. The cone cells respond either to red, green or blue. The resulting relation $R_{S}$ may suggest a solution to the binding problem for color perception. Finally we should consider what might determine the repertoire of a neuron. The brain itself should define this. For example, if a small change in the output frequency of a neuron has no affect on the system then with respect to the system the neuron's state is the same. Similarly if switching over the outputs of two different neurons would have no affect on the system then with respect to the system the neurons are in the same state. This last point is just a suggestion. Note that such a definition of relative node state may result in the relation $R(S_{i})$ for $S_{i}\in T$ no longer being transitive. We will now move onto our discussion concerning objects.
\subsection{Relations between objects defined by typical data}
\label{subsec:2ROTD}
We start this subsection with a definition.
\begin{definition}
\label{def:OBJ}
Let $S$ be a nonempty finite set with typical data $T$ and the relation $R_{S}$ defined on $S$ by $T$. Let $X$ be some other finite set with $\# X\leq\# S$. We say that
\begin{equation}
X_{j}:=\{(a,x_{j}(a)):a\in X,\quad x_{j}:X\rightarrow\{0,1\}\},\mbox{ with a relation }R_{X_{j}}\mbox{ on }X_{j},
\end{equation}
is an {\em{object of}} $S$ if there is some $S_{i}\in T$, $S_{i}=\{(a,f_{i}(a)):a\in S,\quad f_{i}:S\rightarrow\{0,1\}\}$ with relation
\begin{equation}
R_{S_{i}}:=\{((a,f_{i}(a)),(b,f_{i}(b))):(a,b)\in R_{S}\},
\end{equation}
and an injective map $\Lambda_{ji}:X_{j}\rightarrow S_{i}$, given by $\Lambda_{ji}((a,x_{j}(a))):=(\lambda_{ji}(a),f_{i}(\lambda_{ji}(a)))$ where $\lambda_{ji}(a)\in S$, such that for all $(a,x_{j}(a)),(b,x_{j}(b))\in X_{j}$ we have:
\begin{enumerate}
\item[(i)]
$x_{j}(a)=f_{i}(\lambda_{ji}(a))$;
\item[(ii)]
$((a,x_{j}(a)),(b,x_{j}(b)))\in R_{X_{j}}$ if and only if $(\Lambda_{ji}((a,x_{j}(a))),\Lambda_{ji}((b,x_{j}(b))))\in R_{S_{i}}$.
\end{enumerate}
We say that the object $X_{j}$ {\em{embeds into}} $S_{i}$ and denote the set of all objects of $S$ by $\mathcal{O}$.
\end{definition}
We will now show that typical data $T$ defines a relation $R_{\mathcal{O}}$ on the set of objects of $S$ as follows. For $X_{j}\in\mathcal{O}$ let $T_{X_{j}}:=\{S_{i}:X_{j}\mbox{ embeds into }S_{i}\mbox{ where }S_{i}\in T\}$. Note, by Definition \ref{def:OBJ}, that $T_{X_{j}}$ is not empty. Now the relation $R_{\mathcal{O}}$ is given by
\begin{equation}
R_{\mathcal{O}}:=\left\{(X_{j},Y_{k}):\frac{\#(T_{X_{j}}\cap T_{Y_{k}})}{\#T_{X_{j}}}>0.5\mbox{ where }(X_{j},Y_{k})\in\mathcal{O}^{2}\right\}.
\end{equation}
We note that in general $R_{\mathcal{O}}$ need not be symmetric or transitive and that it is the relation obtained by applying a rounding threshold of $0.5$ to the weighted relation $R_{\mathcal{T}}$ given in Figure \ref{fig:8}.
\begin{figure}[ht]
\centering
\includegraphics[width=0.25\textwidth]{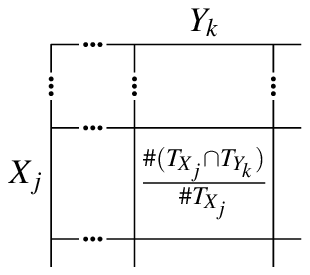}
\caption{The weighted relation table for $R_{\mathcal{T}}$ on $\mathcal{O}$ determined by $T$.}
\label{fig:8}
\end{figure}
Similar to the situation in Example \ref{exa:SPTD}, one can obtain a totally ordered hierarchy of relations on $\mathcal{O}$ by varying the rounding threshold applied to $R_{\mathcal{T}}$. Turning our attention to the topic of float entropy that we began in Subsection \ref{subsec:1FE}, we note that if the receiver not only has a copy of $S$ and $R_{S}$ but also has a copy of $R_{\mathcal{O}}$ then the elements of $T$ should be even more compressible and are even more structured relative to the relations available to the receiver. Finally we note that the theory in this paper easily generalises to cases where the neurons, or other nodes, have more than a two state repertoire, that is we can allow $f_{i}$ to take more than two values in the definition of a data element $S_{i}$ given in Definition \ref{def:CSTD}. In this case one also makes a similar adjustment to the definition of an object $X_{j}$ of $S$.
\section{Development, testing and conclusion}
\label{sec:3}
There are different ways in which this theory can be developed. From a purely theoretical perspective it is interesting to establish the range of structures that can be defined by typical data comprised of comparable nodes noting for example that functions can be defined by relations. This general theory can then be applied to any dynamical system comprised of comparable nodes, e.g. networks. More practically the wealth of established knowledge concerning brain function offers an interdisciplinary approach to theoretical development. Furthermore the theory needs to be tested. In this respect functional MRI with high spatial resolution and other brain imaging technologies could be used. For example FMRI has already been used as a way of obtaining information about the state of V1 that is sufficient for image reconstruction, see \cite{Miyawaki}. However due to spatial distortion of the retino-cortical mapping and restricted FMRI voxel resolution, and perhaps other factors, it is not possible to recognize viewed stimulus from FMRI images directly. Reconstruction often uses methods from linear mathematics and probability where knowledge of the visual stimulus used is necessary during the setup stage. Taking the elements of $S$ to be the voxels covering V1 it is interesting to know whether typical data would give rise to a geometric relationship between the voxels, differing from their FMRI image positions, such that the viewed stimulus would be recognizable from the repositioned voxels. Two methods could be tried when establishing the geometry on $S$. The first would follow the theory as presented in Section \ref{sec:2}. For the second the distances between the voxels could be obtained from the map $d:S^{2}\rightarrow[0,1]$, $d(a,b):=1-R_{T}(a,b)$. In both cases each relation $R(S_{i})$ should have numerical cell values since the similarity of voxel states can be quantified in the range 0 to 1. One type of visual stimulus to try would have a single transition line placed at random in the field of view.
\subsection{Conclusion}
\label{subsec:3CON}
We have already mentioned in Section \ref{sec:2} that the BCM version of Hebbian theory provides evidence of how the brain itself defines typical data. We mentioned the evidence in the case of the primary visual cortex V1 but there is also evidence for the relevance of BCM theory regarding the hippocampus, see \cite{DB}. In particular the typical data that V1 defines should be typical of the states induced by signals from the retina. In Section \ref{sec:2} it is shown, at least in theory and up to a good analogy using a digital camera, that for appropriate parameters such typical data defines a relation on the set of neurons of V1 that gives the perceived geometry for monocular vision. The relation is defined by the typical data by being special in the sense that it minimizes the expectation of the float entropy of the system. However our theory is intended to be applied to typical data for the whole brain so that such a relation also determines how the states of other sensory regions are perceived. For example the relation on the auditory cortex might define how we perceive the relationship between the pitches of the chromatic scale. Of course more work is required in order to determine the extent to which this theory can account for how the brain defines the various aspects of consciousness.\\
However at the higher semantic level it is fairly clear that the typical data for the brain defines relationships between objects in the way described in Subsection \ref{subsec:2ROTD}. For example a good impressionist painting provides V1 with just enough of a particular stimulus such that V1 produces the same state as that induced by a photograph of the same subject. This ability of the brain is widely known as filling-in and shows that typical data defined by the brain will determine a strong relationship between certain objects. Furthermore it is well known that certain parts of the thalamus act as a relay between different parts of the cortex including different sensory regions. This and other connections can arguably result in the brain defining typical data that determines relationships between objects arising from different sensory regions of the cortex, \cite{Shinkareva} is of relevance here. Further the states of the brain during dreaming, visualization with the eyes closed and inner sound are all instances of typical data produced by the brain itself independent of the senses at the time.\\
We will now turn our attention to what is known as the binding problem. In short the binding problem can be summarized by the following observation and question. The visual content of our conscious experience correlates with the state of the visual cortex, whereas the sound content of out conscious experience correlates with the state of the auditory cortex. How therefore can the state of two quite distinct and spatially separated brain regions give rise to a single unified conscious experience? If the theory presented in this paper is correct then the answer is quite straightforward. The content of consciousness is defined by the state of the brain interpreted in the context of the relations, such as those discussed above, defined by the brain's typical data. The typical data is determined by the brain's structure. Hence consciousness is a property of the brain as opposed to being an output of some algorithmic procedure or relying on some homunculus concept. A compact disc on its own is almost meaningless but in the context of a sufficiently large CD library it is a specific piece of music, Beethoven for example or Mozart perhaps. Similarly a brain state on its own is almost meaningless but in the context of the brain's typical data it is a moment of consciousness by which we mean the brain state with the relations defined on it by the typical data and this is for example the view of the coffee cup with the sound of the radio and the taste of the coffee all together.\\
Finally this paper if correct still leaves many questions unanswered and the lack of an attempt to answer them in the context of this initial proposition of the theory is rightful cause for some criticism. Here are a few of these questions:
\begin{enumerate}
\item[(i)]
Can the theory explain the conscious experience of the color red or does the theory need to be extended?
\item[(ii)]
What are the other relations that typical data define?
\item[(iii)]
What connections are there, if any, between our theory and the theory of consciousness as integrated information as proposed by Giulio Tononi, see \cite{Tononi},?
\item[(iv)]
Even though the neurons are an obvious candidate for the elements of the carrier set $S$ are they the right candidate?
\item[(v)]
Let $S_{i}$ be the data element for a given brain state. Is all of the relation $R_{\mathcal{O}}$ contributing to consciousness regarding $S_{i}$ or is only a subset
\begin{equation}
R_{\mathcal{O}}(S_{i}):=\{(X_{j},Y_{k}):(X_{j},Y_{k})\in R_{\mathcal{O}}\mbox{ where both } X_{j}\mbox{ and } Y_{k}\mbox{ embed in to }S_{i}\}?
\end{equation}
\item[(vi)]
Is it useful to also consider a carrier set where the elements are time dependent neurons over a short time interval or some discrete version of the same involving short finite sequences?
\end{enumerate}

{\bf\em{Acknowledgements:}}\\
I would like to thank the School of Mathematical Sciences, University of Nottingham, Nottingham, NG7 2RD, UK for providing continued access to facilities during the period after my PhD whilst I was still registered as a student and writing this paper.

% BibTeX
\nocite{*}     %remove nocite once all refs are cited in the text
\bibliographystyle{unsrt}
\bibliography{masonComplexitybibdata}

% Insert the bibliography data here.
%\begin{thebibliography}{}
%
%\end{thebibliography}

\end{document}